      \theoremstyle{plain}
      \newtheorem{assumption}{Assumption}
      \newtheorem{theorem}{Theorem}
\newcolumntype{P}[1]{>{\centering\arraybackslash}p{#1}}
\newtheorem{remark}{Remark}
\Crefname{equation}{}{}
\journal{Engineering Science and Technology}
\begin{document}

\begin{frontmatter}

\title{Nonlinear PID Controller Design for a 6-DOF UAV Quadrotor System}

\author[iq]{Aws Abdulsalam Najm}
\author[iq]{Ibraheem Kasim Ibraheem}
\ead{ibraheem.i.k@ieee.org}

\address[iq]{Baghdad University - College of Engineering, Electrical Engineering Department, 10001, Baghdad, Iraq}

\begin{abstract}
A Nonlinear PID (NLPID) controller is proposed to stabilize the translational and rotational motion of a 6-DOF UAV quadrotor system and enforce it to track a given trajectory with minimum energy and error. The complete nonlinear model of the 6-DOF quadrotor system are obtained using Euler-Newton formalism and used in the design process, taking into account the velocity and acceleration vectors resulting in a more accurate 6-DOF quadrotor model and closer to the actual system. Six NLPID controllers are designed, each for Roll, Pitch, Yaw, Altitude, and the Position subsystems, where their parameters are tuned using GA to minimize a multi-objective Output Performance Index (OPI). The stability of the 6-DOF UAV subsystems has been analyzed in the sense of Hurwitz stability theorem under certain conditions on the gains of the NLPID controllers.  The simulations have been accomplished under MATLAB/SIMULINK environment and included three different trajectories, i.e., circular, helical, and square. The proposed NLPID controller for each of the six subsystems of the 6-DOF UAV quadrotor system has been compared with the Linear PID (LPID) one and the simulations showed the effectiveness of the proposed NLPID controller in terms of speed, control energy, and steady-state error.
\end{abstract}

\begin{keyword}
UAV \sep 6-DOF \sep Quadrotor \sep NLPID controller \sep Hurwitz stability criterion \sep linear position vector \sep closed-loop \sep circular trajectory
\end{keyword}

\end{frontmatter}


\section{Introduction}\label{Introduction}
A quadrotor is one of the unmanned aerial vehicles (UAV), it does not need a pilot to be controlled. Quadrotor has four arms each one has a rotor on it. Every two rotors on the same axis rotate in the same direction and opposite to the other two rotors direction of rotation. The quadrotor is under-actuated system, because the number of rotors is less than the number of DOF. This makes designing a controller is a difficult problem. Quadrotors applications have been increased in the last years because of their simple implementation, low cost, different sizes, and maneuverability. Many applications founded for danger places, disasters, and rescue \cite{Kumar2015,DallalBashi2017,Saha2018}, quadrotor applications in agriculture \cite{Navia2016a,Daroya2017}, and even in helpful jobs like \cite{Hunt2014a,Camci2016}. The work in \cite{Kim2018} is a good survey for quadrotor applications for entertainment. Many studies could be found for the multi-agent system and formation control like \cite{Eskandarpour2014,Rabah2016,Ghamry2016a}, and for more applications in multi-agent systems we recommend \cite{Hou2017}. Many researchers studied the quadrotor control design with different types of controllers. One of the most used controllers is the LPID control because of its simplicity \cite{Kotarski2016,Hadi2017a,Akhil2011,Alsharif2017a} but on the other hand it has many disadvantages: 1) sometimes it gives a high control signal due to the fact of windup, thus, overshooting and continuing to increase as the accumulated error is unwound (offset by errors in the other direction, 2) the differentiator leads to noise amplification. Other used controllers like: NLPID control \cite{Gonzalez-Vazquez2010}\, and \cite{Benrezki2015}, LQR \cite{Argentim2013a}, geometric control \cite{Lee2010a}, nonlinear model predictive control \cite{Ru2017a}, L1 control \cite{Thu2017a}, fuzzy control \cite{Seidabad2014,EduardoM.Bucio-Gallardo2016}. A further studies for control algorithms with quadrotor systems can be found in \cite{Zulu2014a}.
\par In this paper, a NLPID controller proposed in our previous work \cite{Riyadh2017} which a nonlinear combinations of the error signal is used to stabilize a 6-DOF quadrotor system, its stability verified using Hurwitz stability and its performance compared with that of the most famous one, i.e., the LPID controller. The control system for the 6-DOF UAV consists of six NLPID controllers, three NLPID controllers for the translational system and the rest ones for the rotational system of the underlying UAV, with twelve tuning parameters for each NLPID controller, they are tuned using Genetic Algorithm (GA) and optimized toward the minimization of the proposed multi-objective OPI which is a weighted sum of the Integrated Time Absolute Error (ITAE) and the square of the control signal $U$ (USQR).\par
The structure of this paper is as follows: \autoref{Mathematical Model} presents modeling of the 6-DOF quadrotor system. Next, \autoref{Problem statement} describes the problem statement. The main results: the NLPID controller design and stability analysis are given in \autoref{Nonlinear Controller Design and stability Analysis}. \autoref{Simulation and Results} illustrates the numerical simulations and discussions followed by a conclusion and future work in \autoref{Conclusion and Future Work}.
\section{Mathematical Modelling of the 6-DOF UAV Quadrotor}\label{Mathematical Model}
To control any system, firstly, a mathematical model must be derived. This mathematical model will describe the responses of the system for different inputs. The inputs for the 6-DOF quadrotor system are combinations of the rotors speed ($\Omega$), which in this case is a force $f_t$ to control the altitude ($z$) and the torques ($\tau_x$, $\tau_y$, and $\tau_z$) to control the angels ($\phi$, $\theta$, and $\psi$) respectively, see \eqref{eq2}. The meaning of each parameter is described in \autoref{table1}.
\begin{equation}\label{eq2}
\left\{
\begin{array}{l}
f_t=b(\Omega_1^2+\Omega_2^2+\Omega_3^2+\Omega_4^2)\\
\tau_x=bl(\Omega_3^2-\Omega_1^2 )\\
\tau_y=bl(\Omega_4^2-\Omega_2^2 )\\
\tau_z=d(\Omega_2^2+\Omega_4^2-\Omega_1^2-\Omega_3^2 )
\end{array}
\right.
\end{equation}
\begin{center}
\captionof{table}{Parameters Description} \label{table1} 
\scalebox{0.8}{
  \begin{tabular}{ c | c | c }
    \hline
    Parameters & Description & Units \\ \hline
    $[x\,y\,z]$ & Linear position vector & $m$ \\ \hline
    $[\phi\,\theta\,\psi]$ & Angular position vector & $rad$ \\ \hline
    $[u\,v\,w]$ & Linear velocity vector & $m/sec$ \\ \hline
    $[p\,q\,r]$ & Angular velocity vector & $rad/sec$ \\ \hline
    $[I_x\,I_y\,I_z]$ & Moment of inertia vector & $kg.m^2$ \\ \hline
    $f_t$ & Total thrust generated by rotors & $N$ \\ \hline
    $[\tau_x\,\tau_y\,\tau_z]$ & Control torques & $N.m$ \\ \hline
    $[f_{wx}\,f_{wy}\,f_{wz}]$ & Wind force vector & $N$ \\ \hline
    $[\tau_{wx}\,\tau_{wy}\,\tau_{wz}]$ & Wind torque vector & $N.m$ \\ \hline
    $g$ & Gravitational force & $m/sec^2$ \\ \hline
    $m$ & Total mass & $Kg$ \\ \hline
    $[\Omega_1\,\Omega_2\,\Omega_3\,\Omega_4]$ & Rotors speeds vector & $rad/sec$ \\ \hline
    $b$ & Thrust coefficient & $N.sec^2$ \\ \hline
    $l$ & Motor to center length & $m$ \\ \hline
    $d$ & Drag coefficient & $N.m.sec^2$ \\
    \hline
  \end{tabular}
  }
  \par $c( )\equiv cos⁡( ),s( )\equiv sin⁡( ),and\,t( )\equiv tan⁡( )$
\end{center}\par
The four possible movements of the 6-DOF quadrotor are shown in \autoref{figPossible Movements}. Some researchers \cite{Kotarski2016,Hadi2017a,Akhil2011,Nugraha2017a} depend only on the equations of acceleration for the 6-DOF quadrotor system without taking the velocities into account. In this paper the mathimatical model of the 6-DOF quadrotor system is crafted in such a way that the acceleration and velocity vectors are taken into consideration resulting in a more accurate nonlinear model for the 6-DOF quadrotor system and closer to the actual one. The equations of the nonlinear mathematical model of the 6-DOF quadrotor system are derived based on Euler-Newton formaliation and written in \eqref{eq3} \cite{Sabatino2015}. \autoref{figQuadrotor Dynamics Relations} shows the block diagram of the 6-DOF quadrotor dynamical relations.\par
As said earlier, our mathematical model is based on Euler-Newton equations to represent the 3D motion of the rigid body \cite{Sabatino2015,Beard2008}. To control the 6-DOF quadrotor system, a combination of translational $(x,y,z)$ and rotational $(\phi,\theta,\psi)$ motions is needed. From Newton’s law \cite{Sabatino2015}:
\begin{figure}[h]
\centering
\includegraphics[scale=0.4]{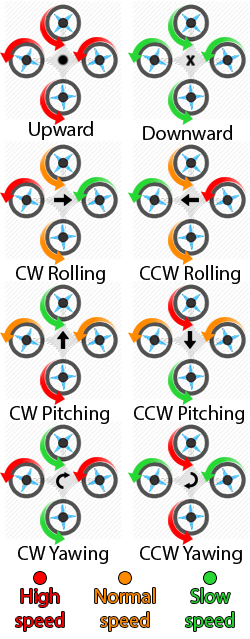}
\caption{Possible Movements of 6-DOF quadrotor system}
\label{figPossible Movements}
\end{figure}
\begin{figure}[h]
\centerline{\includegraphics[scale=1]{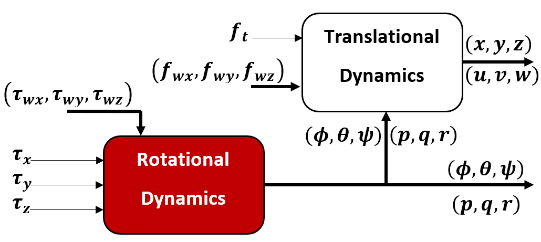}}
\caption{6-DOF Quadrotor system Dynamics Relations}
\label{figQuadrotor Dynamics Relations}
\end{figure}
\begin{figure*}[h]
\begin{equation}\label{eq3}
\left\{
\begin{array}{l}
\left(\!
    \begin{array}{c}
      \dot{x}\\
      \dot{y}\\
      \dot{z}
    \end{array}
\!\right) =
    \begin{pmatrix}
    c(\psi)c(\theta)&[c(\psi)s(\phi)s(\theta)-c(\phi)s(\psi)]&[s(\phi)s(\psi)+c(\phi)c(\psi)s(\theta)]\\
    c(\theta)s(\psi)&[c(\phi)c(\psi)+s(\phi)s(\psi)s(\theta)]&[c(\phi)s(\psi)s(\theta)-c(\psi)s(\phi)]\\
    -s(\theta)&c(\theta)s(\phi)&c(\phi)c(\theta))
    \end{pmatrix}
\left(\!
    \begin{array}{c}
      u\\
      v\\
      w
    \end{array}
\!\right)\\
\left(\!
    \begin{array}{c}
      \dot{u}\\
      \dot{v}\\
      \dot{w}
    \end{array}
\!\right)=
    \begin{pmatrix}
    0&r&-q\\
    -r&0&p\\
    q&-p&0    
    \end{pmatrix}
\left(\!
    \begin{array}{c}
      u\\
      v\\
      w
    \end{array}
\!\right)+
g\left(\!
    \begin{array}{c}
      -s(\theta)\\
      s(\phi)c(\theta)\\
      c(\phi)c(\theta)
    \end{array}
\!\right)+
\frac{1}{m}
\left(\!
    \begin{array}{c}
      f_{wx}\\
      f_{wy}\\
      f_{wz}-f_t
    \end{array}
\!\right)\\
\left(\!
    \begin{array}{c}
      \dot{\phi}\\
      \dot{\theta}\\
      \dot{\psi}
    \end{array}
\!\right)=
    \begin{pmatrix}
    1&s(\phi)t(\theta)&c(\phi)t(\theta)\\
    0&c(\phi)&-s(\phi)\\
    0&\frac{s(\phi)}{c(\theta)}&\frac{c(\phi)}{c(\theta)}
    \end{pmatrix}
\left(\!
    \begin{array}{c}
      p\\
      q\\
      r
    \end{array}
\!\right)\\
\left(\!
    \begin{array}{c}
      \dot{p}\\
      \dot{q}\\
      \dot{r}
    \end{array}
\!\right)=
\left(\!
    \begin{array}{c}
      \frac{I_y-I_z}{I_x}\\
      \frac{I_z-I_x}{I_y}\\
      \frac{I_x-I_y}{I_z}
    \end{array}
\!\right)
\left(\!
    \begin{array}{c}
      rq\\
      pr\\
      pq
    \end{array}
\!\right)+
\left(\!
    \begin{array}{c}
      \frac{\tau_x+\tau_{wx}}{I_x}\\
      \frac{\tau_y+\tau_{wy}}{I_y}\\
      \frac{\tau_z+\tau_{wz}}{I_z}
    \end{array}
\!\right)
\end{array}
\right.
\end{equation}
\end{figure*}
\begin{equation}\label{eq4}
\left(\!
    \begin{array}{c}
      \ddot{x} \\
      \ddot{y} \\
      \ddot{z}
    \end{array}
\!\right) =-\frac{f_t}{m}
    \left(\!
    \begin{array}{c}
      s(\phi)s(\psi)+c(\phi)c(\psi)s(\theta) \\
      c(\phi)s(\psi)s(\theta)-c(\psi)s(\phi) \\
      c(\phi)c(\theta)
    \end{array}
    \!\right)+
    \left(\!
    \begin{array}{c}
      0 \\
      0 \\
      g
    \end{array}
\!\right)
\end{equation}
\section{Problem statement}\label{Problem statement}
Suppose that the equations below represent the nonlinear 6-DOF quadrotor system shown in \autoref{fig quadrotor system}:
\begin{equation}\label{eq1}
\left\{
\begin{array}{l}
X^{n}= F(X,\dot X,...,X^{n-2},X^{n-1})+G(X)U\\
Y=X
\end{array}
\right.
\end{equation}
where $X=Y=[x,y,z,\phi,\theta,\psi,u,v,w,p,q,r]\in\mathbb{R}^{12}$ is the linear, angular position and velocity vectors of the quadrotor system, where $Y$ is the measured output, $U=[U_x,U_y,U_z,U_\phi,U_\theta,U_\psi]\in\mathbb{R}^6$ is the control input vector of the 6-DOF quadrotor system which needs to be designed such that it stabilizes the under actuated unstable 6-DOF quadrotor system and makes it follows a specific trajectory subject to optimum time-domain specifications and minimum control energy.
\begin{figure}[H]
\centerline{\includegraphics[scale=0.4]{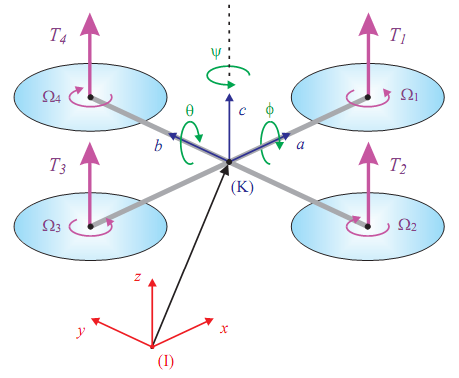}}
\caption{6-DOF quadrotor System}
\label{fig quadrotor system}
\end{figure}
\section{The main results}\label{Nonlinear Controller Design and stability Analysis}
\subsection{Nonlinear Controller Design}
Given the LPID controller, 
\begin{equation}\label{eq5}
U_{PID}=K_p\,e+K_d\,\dot{e}+K_i\int{e}\,dt
\end{equation}
which is a linear combination of the error signal $e\in\mathbb{R}$, its derivative $\dot{e}\in\mathbb{R}$, and its integral $\int{e}\,dt\in\mathbb{R}$ weighted by the gains $(K_p,K_d,K_i)\in\mathbb{R}^+$ respectively. The NLPID controller replaces each gain of the LPID controller with a nonlinear function $f(e)$ of the error signal so as to get a more satisfactory response for the nonlinear 6-DOF quadrotor system given as
\begin{equation}\label{eq6}
\!\left\{
\begin{array}{l}
U_{NLPID}=f_1(e)+f_2(\dot{e})+f_3(\int{e}\,dt) \\
f_i(\beta)=k_i(\beta)|\beta|^{\alpha_{i}}sign(\beta) \\
k_i(\beta)=k_{i1}+\frac{k_{i2}}{1+exp⁡(\mu_{i} \beta^2)},i={1,2,3}
\end{array}
\!\right.
\end{equation}
where $\beta$ could be $e$ , $\dot{e}$ , or $\int{e}\,dt$, $\alpha_{i}\in\mathbb{R}^+$, the function $k_i(\beta)$ is positive function with coefficients $k_{i1},k_{i2},\mu_{i}\in\mathbb{R}^+$. The nonlinear gain term $k_i(\beta)$ is bounded in the sector $[k_{i1}, k_{i1}+k_{i2}/2]$, see \autoref{figcharecter}. Less control energy is obtained with the NLPID control while the error is changing continuously. In this paper the NLPID controller is desighned as in \cite{Riyadh2017} with little modification in the integral term of \eqref{eq6}, where the term $k_{31}$ is added to increase the stability of the closed-loop system as will be shown later.
\begin{figure}[h]
\centering
\includegraphics[scale=0.35]{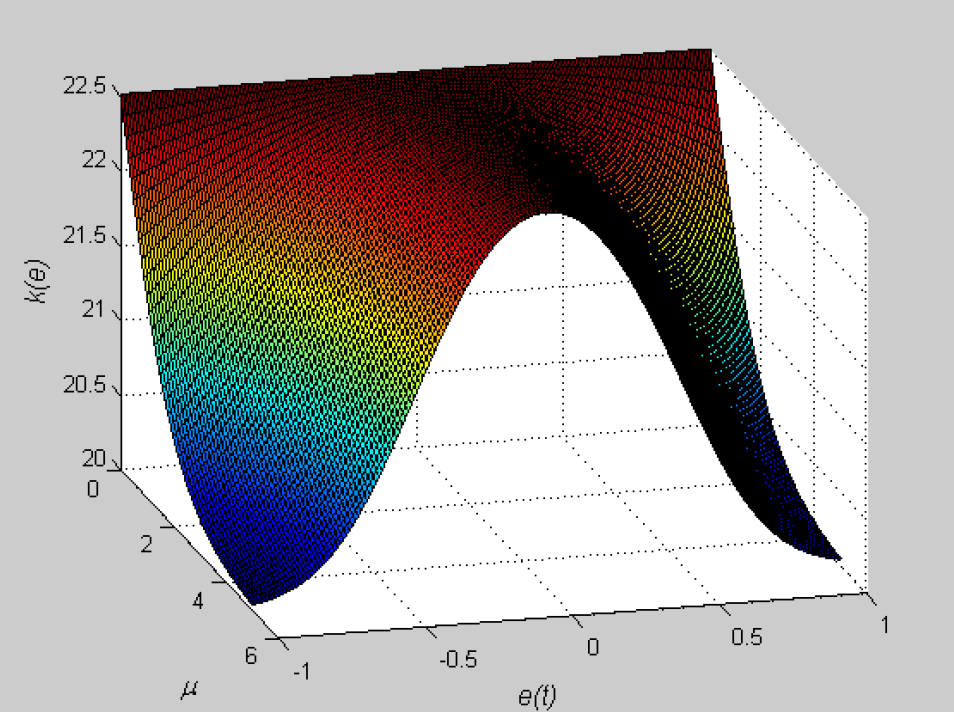}
\caption{Characteristics of the nonlinear gain function $k(e)$ for i=1,
$k_{11}=20, k_{12}=5$ \cite{Riyadh2017}}
\label{figcharecter}
\end{figure}\par
Because of the under actuated phenomenon of the 6-DOF quadrotor system, the control system for the 6-DOF quadrotor system is divided into two parts. The First part is where the input control is available, it is called the \textit{\textbf{Inner-Loop Control(ILC)}}, while the second part is where there is no actual input control available and it is called the \textit{\textbf{Outer-Loop Control(OLC)}}.
\subsubsection{Nonlinear design for ILC}
The proposed control signals for the altitude $z$ and the attitude ($\phi$, $\theta$, and $\psi$)  motions are given as follows, 
\par The throttle force control $f_t$ is given as
\begin{equation}\label{eq7}
\begin{array}{c}
U_z=f_t= f_{z1}(e_z)+f_{z2}(\dot{e}_z)+f_{z3}(\int{e_z}\,dt)\\
e_z=z_{de}-z_{ac}
\end{array}
\end{equation}
\par
While the Roll torque control signal $\tau_x$ is given as
\begin{equation}\label{eq8}
\begin{array}{c}
U_\phi=\tau_x= f_{\phi1}(e_\phi)+f_{\phi2}(\dot{e}_\phi)+f_{\phi3}(\int{e_\phi}\,dt)\\
e_\phi=\phi_{de}-\phi_{ac}
\end{array}
\end{equation}
\par
The Pitch torque control signal $\tau_y$ is given as
\begin{equation}\label{eq9}
\begin{array}{c}
U_\theta=\tau_y= f_{\theta1}(e_\theta)+f_{\theta2}(\dot{e}_\theta)+f_{\theta3}(\int{e_\theta}\,dt)\\
e_\theta=\theta_{de}-\theta_{ac}
\end{array}
\end{equation}
\par
Finally, the Yaw torque control $\tau_z$ can be designed as
\begin{equation}\label{eq10}
\begin{array}{c}
U_\psi=\tau_z= f_{\psi1}(e_\psi)+f_{\psi2}(\dot{e}_\psi)+f_{\psi3}(\int{e_\psi}\,dt)\\
e_\psi=\psi_{de}-\psi_{ac}
\end{array}
\end{equation}
where the subscript \enquote{$de$} means desired reference values, \enquote{$ac$} means the actual or measured values from different sensors of the 6-DOF quadrotor system, $f_{zi},f_{\phi i},f_{\theta i},f_{\psi i},i=1,2,3$ are the NLPID controller gains represented in \eqref{eq6}.
\subsubsection{Nonlinear design for OLC}
Quadrotor system has no real control input for the motion in the $(x,y)$ plane, the following analysis is proposed to generate the appropriate control signals ($U_x$ and $U_y$) for the motion in the $(x,y)$ plane, see \eqref{eq4}. Simplifying the model of \eqref{eq4} by assuming the $\phi$ and $\theta$ angle deviations are small (i.e. $\beta$ is small $\Rightarrow c(\beta)=1 ,s(\beta)=\beta$).
\begin{equation}\label{eq11}
\left(\!
    \begin{array}{c}
      \ddot{x} \\
      \ddot{y}
    \end{array}
\!\right) =\frac{U_z}{m}
    \begin{pmatrix}
      -s(\psi)&-c(\psi) \\
      -s(\psi)&c(\psi)
    \end{pmatrix}
    \left(\!
    \begin{array}{c}
      \phi_{de} \\
      \theta_{de}
    \end{array}
\!\right)
\end{equation}

\begin{figure*}[h]
\centering
\includegraphics[scale=0.5]{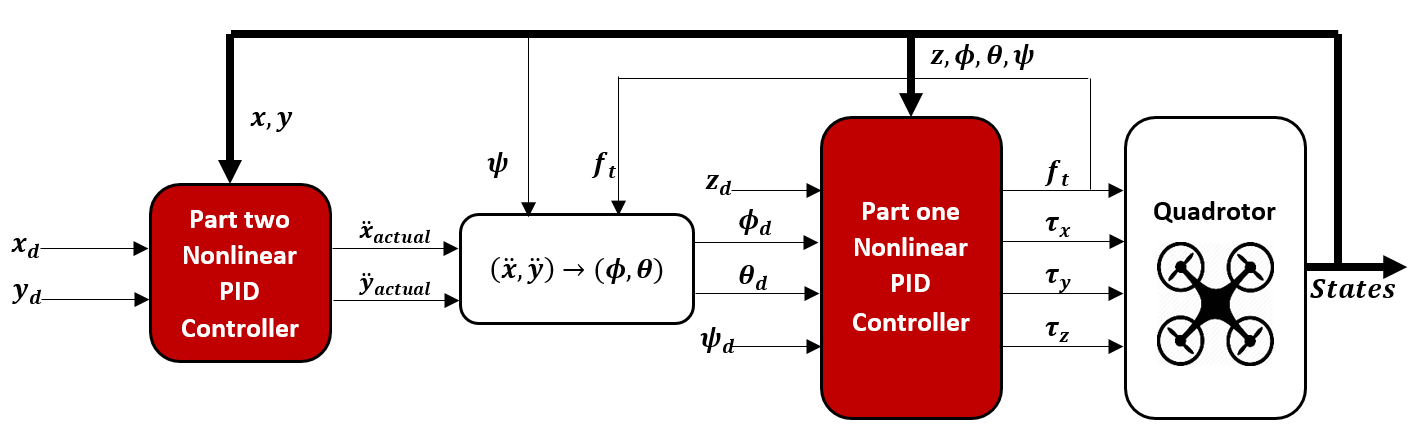}
\caption{Overall Quadrotor system}
\label{Overall 6-DOF Quadrotor System}
\end{figure*}
\begin{equation}\label{eq12}
\left(\!
    \begin{array}{c}
      \phi_{de} \\
      \theta_{de}
    \end{array}
\!\right) =\frac{m}{U_z}
    \begin{pmatrix}
      -s(\psi)&-c(\psi) \\
      -s(\psi)&c(\psi)
    \end{pmatrix}^{-1}
    \left(\!
    \begin{array}{c}
      \ddot{x} \\
      \ddot{y}
    \end{array}
\!\right)
\end{equation}\par
The main objective of the NLPID controller is to get the errors of $x$ and $y$ positions approach zero as described below
\begin{equation}\label{eq13}
\left\{\!
    \begin{array}{c}
      \ddot{x}_{de}-\ddot{x}_{ac}+f_{x1}(e_x)+f_{x2}(\dot{e}_x)+f_{x3}(\int{e}_x\,dt)\\=0 \\
      \ddot{y}_{de}-\ddot{y}_{ac}+f_{y1}(e_y )+f_{y2}(\dot{e}_y)+f_{y3}(\int{e}_y\,dt)\\=0
    \end{array}
\!\right.
\end{equation}
Substitute $\ddot{x}_{ac},\ddot{y}_{ac}$ of \eqref{eq13} into \eqref{eq12},yields,
\begin{equation}\label{eq14}
\left.\!
\begin{array}{c}
\left(\!
    \begin{array}{c}
      \phi_{de} \\
      \theta_{de}
    \end{array}
\!\right) =\frac{m}{U_z}
    \begin{pmatrix}
      -s(\psi)&-c(\psi) \\
      -s(\psi)&c(\psi)
    \end{pmatrix}^{-1}\times \\
    \left(\!
    \begin{array}{c}
	\ddot{x}_{de}+f_{x1}(e_x)+f_{x2}(\dot{e}_x)+f_{x3}(\int{e}_x\,dt) \\
    \ddot{y}_{de}+f_{y1}(e_y )+f_{y2}(\dot{e}_y)+f_{y3}(\int{e}_y\,dt)
    \end{array}
\!\right)
    \end{array}
\!\right.
\end{equation}
Letting $U_x, U_y$ be the virtual control signals for $x, y$ respectively, which are proposed as follows
\begin{equation}\label{eq14.0}
\begin{array}{c}
U_x=\ddot{x}_{ac}= f_{x1}(e_x)+f_{x2}(\dot{e}_x)+f_{x3}(\int{e_x}\,dt)\\
e_x=x_{de}-x_{ac}
\end{array}
\end{equation}
\begin{equation}\label{eq14.1}
\begin{array}{c}
U_y=\ddot{y}_{ac}= f_{y1}(e_y)+f_{y2}(\dot{e}_y)+f_{y3}(\int{e_y}\,dt)\\
e_y=y_{de}-y_{ac}
\end{array}
\end{equation}
where $f_{xi},f_{yi},i=1,2,3$ are the NLPID controller gains given in \eqref{eq6}.
By taking into consideration $\ddot{x}_{de}$ and $\ddot{y}_{de}=0$, then \eqref{eq14} can be written as,
\begin{equation}\label{eq15}
\left(\!
    \begin{array}{c}
      \phi_{de} \\
      \theta_{de}
    \end{array}
\!\right) =\frac{m}{U_z}
    \begin{pmatrix}
      -s(\psi)&c(\psi) \\
      -s(\psi)&-c(\psi)
    \end{pmatrix}
    \left(\!
    \begin{array}{c}
      U_x \\
      U_y
    \end{array}
\!\right)
\end{equation}\par
The Overall controlled quadrotor system is shown in \autoref{Overall 6-DOF Quadrotor System}.
\begin{remark}
To investigate the stability of the 6-DOF quadrotor system based on its nonlinear model, the nonlinear model equations of \eqref{eq3} is converted into six subsystems each one can be represented by a second order Brunovsky form given as
\begin{equation}\label{eq18.6}
\left\{\!
    \begin{array}{l}
      \dot{\zeta}_{1}=\zeta_{2} \\
      \dot{\zeta}_{2}=F_{U}+hU_{\zeta} \\
      \end{array}
\!\right.
\end{equation}
where $\zeta_1=\{x,y,z,\phi,\theta,\psi\}\in\mathbb{R}, \zeta_2=\{u,v,w,p,q,r\}\in\mathbb{R}$, $U_\zeta=\{U_x,U_y,-U_z,U_\phi,U_\theta,U_\psi\}\in\mathbb{R}$ and $h\in\mathbb{R}^+$, $F_U$  is unknown function which needs to be identified to analyze the stability of \eqref{eq18.6}. Some studies like \cite{Abdul-adheem2016} proposed an extended state observer to estimate the unknown function $F_U$, while \cite{Wang2006} proposed a radial basis function (RBF) to approximate $F_U$ and include it in the proposed control law. Since our proposed nonlinear PID controller belongs to a class of passive controllers, where the unknown function cannot be estimated by the controller itself, the stability test is based on the linearized model where the unknown function $F_U$ is diminished.
\end{remark}
\subsection{Stability Analysis of the closed-loop system}\label{Stability Analysis}
In this section, the overall stability analysis of both the \textbf{\textit{ILC}} and \textbf{\textit{OLC}} subsystems will be demonstrated using Routh-Hurwitz stability criterion on the linearized model of the 6-DOF quadrotor derived in \cite{Sabatino2015} with the virtual controllers $U_x$ and $U_y$ proposed in \eqref{eq14.0} and \eqref{eq14.1} respectively. The linearized model of the overall 6-DOF quadrotor system is represented as
\begin{equation}\label{eq15.5}
\left\{\!
    \begin{array}{l}
      \left.
      \begin{array}{l}
      \dot{x}_{1}=x_{2} \\
      \dot{x}_{2}=U_{\phi}/I_{x} \\
      \end{array}
      \quad\quad\right\}(19.a)
      \\
      \left.
      \begin{array}{l}
      \dot{x}_{3}=x_{4} \\
      \dot{x}_{4}=U_{\theta}/I_{y} \\
      \end{array}
      \quad\quad\,\right\}(19.b)
      \\
      \left.
      \begin{array}{l}
      \dot{x}_{5}=x_{6} \\
      \dot{x}_{6}=U_{\psi}/I_{z} \\
      \end{array}
      \qquad\right\}(19.c)
      \\
      \left.
      \begin{array}{l}
      \dot{x}_{7}=x_{8} \\
      \dot{x}_{8}=U_{x} \\
      \end{array}
      \qquad\quad\;\:\right\}(19.d)
      \\
      \left.
      \begin{array}{l}
      \dot{x}_{9}=x_{10} \\
      \dot{x}_{10}=U_{y} \\
      \end{array}
      \qquad\quad\,\right\}(19.e)
      \\
      \left.
      \begin{array}{l}
      \dot{x}_{11}=x_{12} \\
      \dot{x}_{12}=-U_{z}/m \\
      \end{array}
      \quad\right\}(19.f)
    \end{array}
\!\right.
\end{equation}
where $\textbf{x}\in\mathbb{R}^{12}:\textbf{x}=[\phi,p,\theta,q,\psi,r,x,u,y,v,z,w]$. 
Each one of the six subsystems described in \eqref{eq15.5} can be represented in the general form
\begin{equation}\label{eq15.51}
\left\{\!
    \begin{array}{l}
      \dot{\eta}_{1}=\eta_{2} \\
      \dot{\eta}_{2}=hU_{\eta} \\
      \end{array}
\!\right.
\end{equation}     
where $\eta_1,\eta_2\in\mathbb{R},\eta_1=x_i,\eta_2=x_{i+1}$ for $i=\{1,3,5,7,9,11\}$, $U_\eta=\{U_\phi,U_\theta,U_\psi,U_x,U_y,-U_z\}\in\mathbb{R}$ and $h\in\mathbb{R}^+$. If the stability of system \eqref{eq15.51} is ensured then the 6-DOF quadrotor subsystems are stable and the overall quadrotor system is stable in the sense of Hurwitz stability theorem. Before proceeding, some assumptions are needed.
\begin{assumption}\label{as1}
In order to prove the stability of the quadrotor system \eqref{eq15.5}, all $\alpha$,s in \eqref{eq6} will be approximated to $1$, since the tuned values of $\alpha$,s for the six controllers were almost around one, i.e. $\alpha_i\thickapprox1, i=1,2,...,6$ (i.e. $|\beta|\,sign(\beta)=\beta$).
\end{assumption}
\begin{assumption}\label{as2}
All the states $x,y,z,\phi,\theta$ and $\psi$ of the 6-DOF UAV quadrotor system are observable states, i.e. there is no need for the state observer. In case of non observable states, one can design an state observer or an extended state observer as in \cite{Ibraheem2018,Abdul-adheem2016}.
\end{assumption}
\begin{theorem}\label{th1}
Given the NLPID controller proposed in \Cref{eq7,eq8,eq9,eq10,eq14.0,eq14.1} for \eqref{eq15.5}, if assumptions \autoref{as1} and \autoref{as2} hold true, then the closed-loop system is Hurwitz stable for $k_1(e_1)\in[k_{11},k_{11}+\frac{k_{12}}{2}]$, $k_3(e_0)\in[k_{31},k_{31}+\frac{k_{32}}{2}]$, and $k_2(e_2)\in[k_{21},k_{21}+\frac{k_{22}}{2}]$.
\end{theorem}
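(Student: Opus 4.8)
The plan is to collapse each of the six closed-loop subsystems in \eqref{eq15.5} into a single scalar third-order error equation, freeze the sector-bounded nonlinear gains of \eqref{eq6}, and apply the Routh--Hurwitz test to the resulting cubic. First I would invoke \autoref{as1} so that $|\beta|^{\alpha_i}sign(\beta)=\beta$ and the control law \eqref{eq6} driving any subsystem in the general form \eqref{eq15.51} reduces to $U_\eta=k_1(e_1)\,e_1+k_2(e_2)\,e_2+k_3(e_0)\,e_0$, where $e_1$ is the tracking error, $e_2=\dot e_1$, $e_0=\int_0^t e_1\,d\tau$, and each $k_i(\cdot)$ remains in the interval quoted in the theorem — the sector bound sketched in \autoref{figcharecter} and established right after \eqref{eq6}. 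By \autoref{as2} the signals $e_1$ and $e_2$ are directly available, so no observer-injection term enters the loop, and, consistently with the outer-loop derivation \eqref{eq14}--\eqref{eq15}, the reference accelerations are taken to be zero so that no feed-forward term survives.

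Next I would substitute $U_\eta$ into the Brunovsky pair $\dot\eta_1=\eta_2$, $\dot\eta_2=hU_\eta$ (with $h>0$) and write the loop in the coordinates $z=(e_0,e_1,e_2)^{\top}$; with the sign bookkeeping of \eqref{eq15.51} this yields $\dot z=Az$ with $A$ the companion matrix whose last row is $(-hk_3,\,-hk_1,\,-hk_2)$, equivalently $\dddot e_1+hk_2\,\ddot e_1+hk_1\,\dot e_1+hk_3\,e_1=0$. Treating the bounded gains as frozen coefficients, the characteristic polynomial is $p(s)=s^{3}+hk_2\,s^{2}+hk_1\,s+hk_3$, whose Routh array has first column $\{1,\ hk_2,\ (h^{2}k_1k_2-hk_3)/(hk_2),\ hk_3\}$. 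Hence $p(s)$ is Hurwitz for \emph{every} $(k_1,k_2,k_3)$ in the product of the three stated intervals precisely when it is Hurwitz at the binding corner $(k_{11},\,k_{21},\,k_{31}+\tfrac{k_{32}}{2})$: the conditions $hk_2>0$ and $hk_3>0$ hold on the whole box because $k_2\ge k_{21}>0$ and $k_3\ge k_{31}>0$ — which is exactly why the constant $k_{31}$ was appended to the integral gain, lest the constant term $a_0=hk_3$ approach zero and the margin collapse — while the remaining condition $h^{2}k_1k_2>hk_3$ is worst at that corner and reduces to the single design inequality $hk_{11}k_{21}>k_{31}+\tfrac{k_{32}}{2}$, which the GA-tuned gains satisfy. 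Running this for each row of \eqref{eq15.5}, with $h$ equal to $1/I_x,1/I_y,1/I_z,1,1,1/m$ in turn, makes every subsystem Hurwitz, hence the overall 6-DOF closed loop.

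The step I expect to be the real obstacle is the frozen-coefficient reduction itself: the closed loop is genuinely nonlinear through $k_i(\cdot)$, so the Routh--Hurwitz test is strictly justified only for the Jacobian at the origin — where $k_i(0)=k_{i1}+k_{i2}/2$, the upper sector endpoint — or, more ambitiously, through an absolute-stability / circle-criterion argument that uses the sector bounds uniformly along trajectories. I would present the argument at the former level, consistent with the passive-controller viewpoint of the Remark (the unknown term $F_U$ being discarded in the linearized model \eqref{eq15.5}), and simply record that the stated gain ranges keep every frozen linearization Hurwitz; a full Lyapunov treatment of the nonlinear closed loop lies beyond the scope fixed by \autoref{as1} and \autoref{as2}.
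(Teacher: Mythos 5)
Your proposal is correct and follows essentially the same route as the paper: reduce each subsystem of \eqref{eq15.5} via \eqref{eq15.51} to the error companion form \eqref{eq18.2}, extract the cubic $\lambda^3+hk_2\lambda^2+hk_1\lambda+hk_3$, and apply the Routh--Hurwitz test with the gains frozen in their sector bounds. If anything you are slightly more careful than the paper: your single binding-corner inequality $hk_{11}k_{21}>k_{31}+\tfrac{k_{32}}{2}$ is the correct uniform condition over the whole gain box (the paper instead evaluates $\Delta_2>0$ at two non-binding corner combinations), and your explicit caveat about the frozen-coefficient/linearization status of the Hurwitz argument is left implicit in the paper.
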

\begin{proof}\label{proof1}
To proceed the proof of \autoref{th1}, each of the six subsystems of \eqref{eq15.5} can be represented by \eqref{eq15.51}, where the error dynamics for the closed-loop system can be written as
\begin{equation}\label{eq18.0}
\left\{\!
    \begin{array}{l}
      e_1=\eta_{1_{de}}-\eta_{1_{ac}}\\
      e_2=\eta_{2_{de}}-\eta_{2_{ac}}\\
      e_0=\int{e_1}\,dt
      \end{array}
\!\right.
\end{equation}
Taking the derivative of \eqref{eq18.0} and knowing that $\dot{\eta}_{1_{de}}=\eta_{2_{de}}$ and $\dot{\eta}_{2_{de}}=0$. This yields,
\begin{equation}\label{eq18.1}
\left\{\!
    \begin{array}{l}
      \dot{e}_0=e_1\\
      \dot{e}_1=e_2\\
      \dot{e}_2=-hU_{\eta}\\
      \end{array}
\!\right.
\end{equation}
Closing the loop by substituting the control signal \eqref{eq6} in \eqref{eq18.1} and expressing it in matrix form, we get,
\begin{equation}\label{eq18.2}
\begin{bmatrix}
\dot{e}_0  \\
\dot{e}_1  \\
\dot{e}_2
\end{bmatrix}=
A_c
\begin{bmatrix}
e_0  \\
e_1  \\
e_2
\end{bmatrix}
\end{equation}
where $A_c=\begin{bmatrix}
0&1&0  \\
0&0&1  \\
-hk_3(e_0)&-hk_1(e_1)&-hk_2(e_2)
\end{bmatrix}$
\par Finding the characteristics equation for $A_c$ by the relation $|\lambda I-A_c|$, yields,
\begin{equation}\label{eq23.0}
\lambda^3+hk_2(e_2)\lambda^2+hk_1(e_1)\lambda+hk_3(e_0)
\end{equation}\par
The Hurwitz Matrix for the characteristics equation is given as\\
$H=\begin{bmatrix}
hk_2(e_2)&hk_3(e_0)&0  \\
1&hk_1(e_1)&0 \\
0&hk_2(e_2)&hk_3(e_0)
\end{bmatrix}$\\
The conditions for the system \eqref{eq23.0} to be Hurwitz stable are given as\\
$\Delta_1=hk_2(e_2)>0$\\
$\Delta_2=h^2k_1(e_1)k_2(e_2)-hk_3(e_0)>0$\\
$\Delta_3=h^3k_1(e_1)k_2(e_2)k_3(e_0)-h^2k_3^2(e_0)>0$ \\
$\Delta_3=hk_3(e_0)\Delta_2>0$\\
As we stated before, $k_i(\beta)$ is sector bounded in the range $[k_{i1}, k_{i1}+k_{i2}/2]$ and always positive, assuming a range for any two of $\{k_1(e_1),k_2(e_2),k_3(e_0)\}$ will lead to the range of the third one , e.g., let $k_1(e_1)\in[k_{11},k_{11}+\frac{k_{12}}{2}]$ and $k_3(e_0)\in[k_{31},k_{31}+\frac{k_{32}}{2}]$, this will results in $k_2(e_2)>k_3(e_0)/h{k_1(e_1)}$ and gives $k_{21}>\frac{k_{31}}{hk_{11}+h\frac{k_{12}}{2}}$ and $k_{21}+\frac{k_{22}}{2}>\frac{k_{31}+\frac{k_{32}}{2}}{hk_{11}}$, which will ensure the closed-loop system to be stable in the sense of the Hurwitz stability theorem.
\end{proof}
\begin{remark}
The linearized model \eqref{eq15.5} of the 6-DOF quadrotor is used only to prove the stability of the closed-loop quadrotor system, while the complete nonlinear model represented by \eqref{eq3} is used in the control design and simulations.
\end{remark}
\section{Simulation Results and  Case Studies}\label{Simulation and Results}
\subsection{Step reference tracking}
The 6-DOF nonlinear quadrotor dynamic model and the NLPID controller are implemented in MATLAB/Simulink, where we have assumed that the wind forces and torques $[f_{wx},f_{wy},f_{wz},\tau_{wx},\tau_{wy},\tau_{wz}]$ are negligible. The
parameters values of the quadrotor used in the simulations are listed in \autoref{table2}. In our simulations an unconstrained multi-objective optimization is conducted to tune the parameters of the NLPID controllers, i.e.,\par 
\begin{center}
Find $\Gamma=[\Gamma_{x},\Gamma_{y},\Gamma_{z},\Gamma_{\phi},\Gamma_{\theta},\Gamma_{\psi}]$ which minimizes
\begin{equation}\label{eq19}
\left\{\!
\begin{array}{l}
opi_i=w_{1i} \times \frac{ITAE_i}{N_{1i}}+w_{2i} \times \frac{USQR_i}{N_{2i}}\\
OPI=\sum_i (\widehat{w}_{i} \times iop_i), i=x,y,z,\psi
\end{array}  
\!\right. 
\end{equation}
where $\Gamma_\sigma=[k_{11}\;k_{12}\;k_{21}\;k_{22}\;k_{31}\;k_{32}\;\mu_1\;\mu_2\;\mu_3\;\alpha_1\;\alpha_2\;\alpha_3]$,\\ $\sigma=\{x,y,z,\phi,\theta,\psi\}$
\end{center}
while for the LPID controller 
\begin{center}
Find $\Gamma=[\Gamma_{x},\Gamma_{y},\Gamma_{z},\Gamma_{\phi},\Gamma_{\theta},\Gamma_{\psi}]$ which minimizes \Cref{eq19}
where $\Gamma_\sigma=[Kp\;Kd\;Ki]$, $\sigma=\{x,y,z,\phi,\theta,\psi\}$
\end{center}

\begin{center}
\captionof{table}{Parameters’ Values} \label{table2} 
\scalebox{1}{
  \begin{tabular}{ c | c }
    \hline
   Parameter&Value \\ \hline
   $I_x$&$8.5532 \times 10^{-3}$\\ \hline
   $I_y$&$8.5532\times 10^{-3}$\\ \hline
   $I_z$&$1.476\times 10^{-2}$\\ \hline
   $g$&$9.81$\\ \hline
   $m$&$0.964$\\ \hline
   $b$&$7.66\times 10^{-5}$\\ \hline
   $d$&$5.63\times 10^{-6}$\\ \hline
   $l$&$0.22$\\
   \hline
  \end{tabular}
  }
\end{center}
\par
The ITAE and USQR are defined in \autoref{table3}. The weighting variables must satisfy $w_{1i}+w_{2i}=1$, They are defined as the relative emphasis of one objective as compared to the other. Their values are chosen to increase the pressure on selected objective function. The same applies for the weighting variables $\widehat{w}_i$. While $N_{1i}$ and $N_{2i}$ are normalizing variables. They are included in the performance index to ensure that the individual objectives  have comparable values, and are treated equally likely by the tuning algorithm. Because, if a certain objective is of very high value, while the second one has very low value, then the tuning algorithm will pay much consideration to the highest one and leave the other with little reflection on the system. The reason for taking just $x,y,z$, and $\psi$ in the optimization process is due to the fact that the position of UAV is characterized by the 3D coordinates (\textit{x,y,z}) and rotation about z-axis $\psi$. Adding $\phi$ and $\theta$ will result in a better performance but the optimization process will take longer time. The values of $w_{1i}=0.6, w_{2i}=0.4, \widehat{w}_i=0.25, i={x,y,z,\psi}$, $N_{1i}=N_{2i}=1$ for $i={x,y,\psi}$, while $N_{1z}=1$ and $N_{2z}=4500$. The values of both controllers after tuning are listed in \autoref{table4} and \autoref{table5}. After tuning the overall quadrotor’s six controllers, the performance indices for ($x,y,z,$ and $\psi$) and the 6-DOF quadrotor total $OPI$ are shown in \autoref{table11}.
\begin{center}
\captionof{table}{Performance Indices} \label{table3} 
  \scalebox{1}{
  \begin{tabular}{ P{2cm} | P{2.5cm} | P{2cm} }
    \hline 
   Performance Index& \qquad\quad Description&Mathematical Representation \\ \hline 
   ITAE&Integrated time absolute error&$\int_0^{t_f}{t |e(t)|}\,dt$\\ \hline
   USQR&Controller energy&$\int_0^{t_f}{[u(t)]^2}\,dt$\\
   \hline
  \end{tabular}
  }\par
  *$t_f$ is the final time of simulation
\end{center}
\begin{table}[H]
\begin{center}
\captionof{table}{ LPID Parameters} \label{table4} 
\scalebox{1}{
  \begin{tabular}{ c  c | c | c }
     \cline{2-4}
   	&$k_p$&$k_i$&$k_d$ \\ \hline 
   	\multicolumn{1}{ c| }{$x$}&$0.28$&$2.73\times10^{-6}$&$0.63$\\ \hline
	\multicolumn{1}{ c| }{$y$}&$0.36$&$1.56\times10^{-5}$&$0.88$\\ \hline
	\multicolumn{1}{ c| }{$z$}&$184.02$&$103.73$&$22.5$\\ \hline
	\multicolumn{1}{ c| }{$\phi$}&$0.88$&$0.9$&$0.3$\\ \hline
	\multicolumn{1}{ c| }{$\theta$}&$0.62$&$0.81$&$0.05$\\ \hline
	\multicolumn{1}{ c| }{$\psi$}&$0.99$&$0.49$&$0.56$\\
   \hline
  \end{tabular}
  }
\end{center}
\end{table}
\begin{center}
\captionof{table}{ NLPID Parameters} \label{table5} 
\scalebox{0.9}{
  \begin{tabular}{ c  P{0.7cm} | P{0.7cm} | c | c | c | c }
     \cline{2-7}
      &$x$&$y$&$z$&$\phi$&$\theta$&$\psi$\\ \hline 
      \multicolumn{1}{ c| }{$k_{11}$}&$	1.51$&$	1.38$&$	27.5$&$	0.77$&$	0.48$&$	0.76$\\ \hline
	  \multicolumn{1}{ c| }{$k_{12}$}&$	0.04$&$	0.03$&$	8.76$&$	0.06$&$	0.03$&$	0.16$\\ \hline
	  \multicolumn{1}{ c| }{$k_{21}$}&$	1.13$&$	2.51$&$	8.8$&$	0.2$&$	0.08$&$	0.17$\\ \hline
	  \multicolumn{1}{ c| }{$k_{22}$}&$	0.18$&$	0.04$&$	4.71$&$	0.04$&$	0.12$&$	0.11$\\ \hline
	  \multicolumn{1}{ c| }{$k_{31}$}&$	1.81\times10^{-6}$&$	5.72\times10^{-5}$&$	18.49$&$	1.08$&$	0.88$&$	0.27$\\ \hline
	  \multicolumn{1}{ c| }{$k_{32}$}&$	10^{-6}$&$	8.69\times10^{-6}$&$	10.02$&$	0.08$&$	0.11$&$	0.08$\\ \hline
	  \multicolumn{1}{ c| }{$\mu_1$}&$	0.11$&$	0.08$&$	0.31$&$	0.07$&$	0.84$&$	0.25$\\ \hline
	  \multicolumn{1}{ c| }{$\mu_2$}&$	0.08$&$	0.36$&$	0.36$&$	0.56$&$	1.43$&$	0.46$\\ \hline
	  \multicolumn{1}{ c| }{$\mu_3$}&$	0.18$&$	0.6$&$	0.98$&$	0.58$&$	0.28$&$	0.81$\\ \hline
	  \multicolumn{1}{ c| }{$\alpha_1$}&$	0.93$&$	0.93$&$	0.96$&$	0.96$&$	0.96$&$	0.98$\\ \hline
	  \multicolumn{1}{ c| }{$\alpha_2$}&$	0.93$&$	0.92$&$	0.97$&$	0.96$&$	1$&$	0.95$\\ \hline
	  \multicolumn{1}{ c| }{$\alpha_3$}&$	0.95$&$	0.93$&$	0.97$&$	0.97$&$	0.97$&$	0.92$\\
   \hline
  \end{tabular}
  }
\end{center}
\begin{center}
\captionof{table}{ Position and Yaw Performance indices} \label{table11} 
\scalebox{0.9}{
  \begin{tabular}{ c  c | c | c | c  }
     \cline{2-5}
      &\multicolumn{2}{ c| }{LPID}&\multicolumn{2}{ c }{NLPID}\\ \cline{2-5}
      &ITAE&ISU&ITAE&ISU\\ \hline
      \multicolumn{1}{ c| }{$x$}&$14.285931$&$0.134411$&$0.438883$&$0.381936$\\ \hline
	  \multicolumn{1}{ c| }{$y$}&$7.498694$&$0.323482$&$1.066173$&$1.226223$\\
   \hline
   	  \multicolumn{1}{ c| }{$z$}&$0.059225$&$5197.496$&$0.152148$&$4516.303$\\
   \hline
   	  \multicolumn{1}{ c| }{$\psi$}&$1.377493$&$0.030779$&$0.506028$&$0.037517$\\
   \hline
   \multicolumn{1}{ c| }{$OPI$}&\multicolumn{2}{ c| }{$3.6476$}&\multicolumn{2}{ c }{$0.5894$}\\ \hline
  \end{tabular}
  }
\end{center}
\par 
A unit step reference inputs ($x_{de}$, $y_{de}$, $z_{de}$, and $\psi_{de}$) have been applied to the position ($x,y,$ and $z$) of the 6-DOF and the yaw ($\psi$) orientation. the curves in \autoref{x-position} (a), represent the response of the $x$-position of the quadrotor system using both controllers, while (b) represent the energy signal produced by the controllers to achieve the required position. The output response for $y, z,$ and $\psi$ are shown in \Cref{y-position,z-position,ψ-position}.  As can be seen, the control signal produced by the NLPID controller is less fluctuating than that of the LPID controller. The NLPID controller shows a faster  response than the LPID one, except for the $z$-position, where the LPID controller presents a faster tracking, but on the account of a large control energy being spent. This increase in the energy of the control signal is undesired in the practice, since it leads to actuator saturation. The overshoot in the output response for the PID controller is very clear. The output responses of $\phi$ and $\theta$ are drawn in \Cref{ϕ-position,θ-position} respectively which show the effectiveness of the proposed NLPID controller over the linear one.
 \begin{figure}[h]
    \centering
    \begin{subfigure}{0.7\linewidth}
        \includegraphics[width=\linewidth]{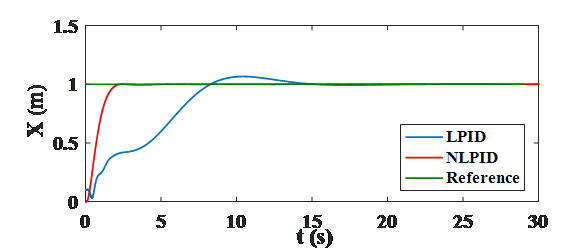}
        \caption{}
    \end{subfigure}
    \begin{subfigure}{0.7\linewidth}
        \includegraphics[width=\linewidth]{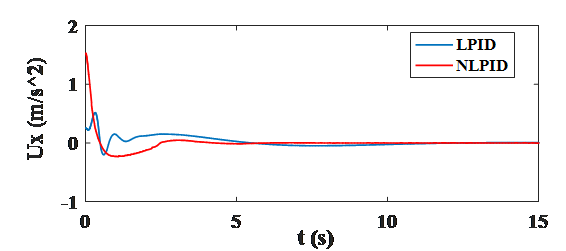}
        \caption{}
    \end{subfigure}
    \caption{$x$-position (a) time response (b) controller signal}
    \label{x-position}
\end{figure}
\begin{figure}[!]
    \centering
    \begin{subfigure}{0.7\linewidth}
        \includegraphics[width=\linewidth]{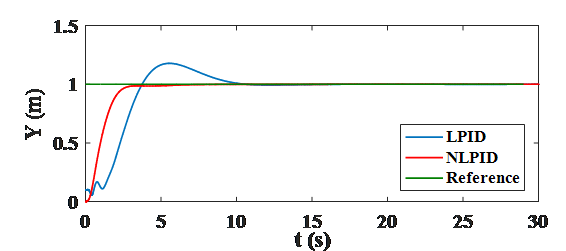}
        \caption{}
    \end{subfigure}
    \begin{subfigure}{0.7\linewidth}
        \includegraphics[width=\linewidth]{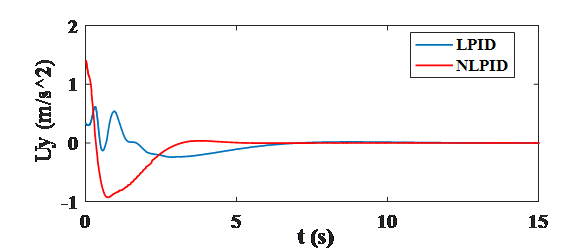}
        \caption{}
    \end{subfigure}
    \caption{$y$-position (a) time response (b) controller signal}
    \label{y-position}
\end{figure}
\begin{figure}[H]
    \centering
    \begin{subfigure}{0.7\linewidth}
        \includegraphics[width=\linewidth]{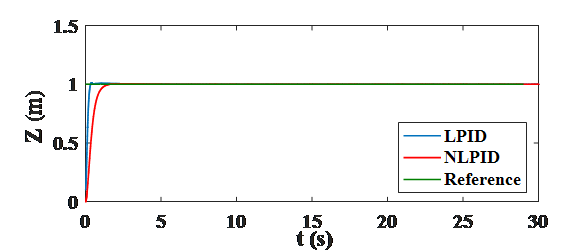}
        \caption{}
    \end{subfigure}
    \begin{subfigure}{0.7\linewidth}
        \includegraphics[width=\linewidth]{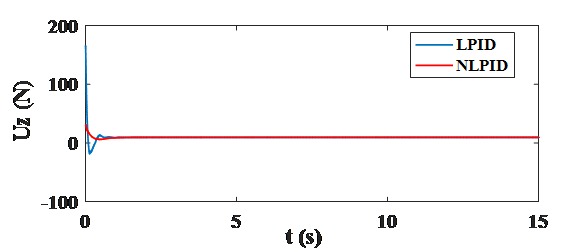}
        \caption{}
    \end{subfigure}
    \caption{$z$-position (a) time response (b) controller signal}
    \label{z-position}
\end{figure}
\begin{figure}[h]
    \centering
    \begin{subfigure}{0.7\linewidth}
        \includegraphics[width=\linewidth]{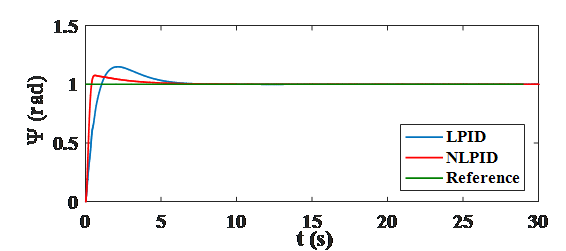}
        \caption{}
    \end{subfigure}
    \begin{subfigure}{0.7\linewidth}
        \includegraphics[width=\linewidth]{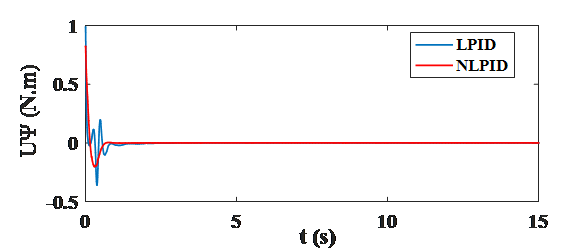}
        \caption{}
    \end{subfigure}
    \caption{$\psi$-position (a) time response (b) controller signal}
    \label{ψ-position}
\end{figure}
\begin{figure}[H]
    \centering
    \begin{subfigure}{0.7\linewidth}
        \includegraphics[width=\linewidth]{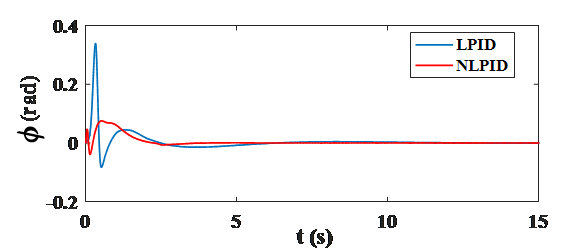}
        \caption{}
    \end{subfigure}
    \begin{subfigure}{0.7\linewidth}
        \includegraphics[width=\linewidth]{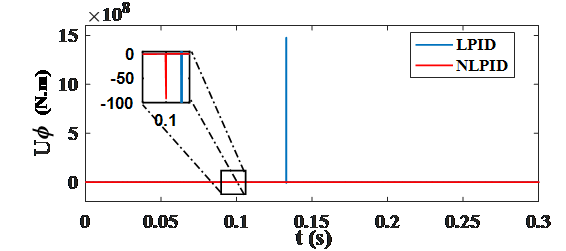}
        \caption{}
    \end{subfigure}
    \caption{$\phi$-position (a) time response (b) controller signal}
    \label{ϕ-position}
\end{figure}
\begin{figure}[H]
    \centering
    \begin{subfigure}{0.7\linewidth}
        \includegraphics[width=\linewidth]{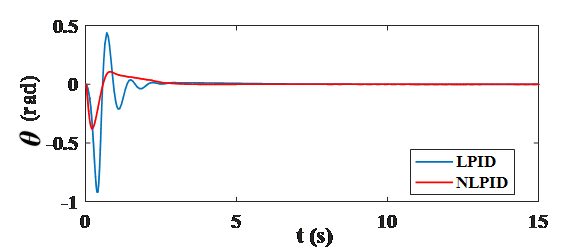}
        \caption{}
    \end{subfigure}
    \begin{subfigure}{0.7\linewidth}
        \includegraphics[width=\linewidth]{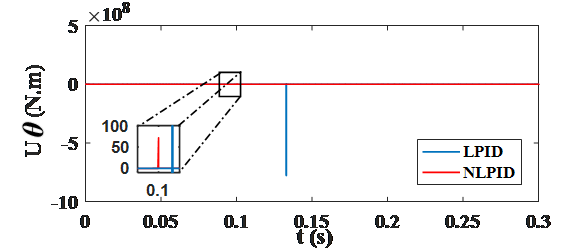}
        \caption{}
    \end{subfigure}
    \caption{$\theta$-position (a) time response (b) controller signal}
    \label{θ-position}
\end{figure}
This has been definitely reflected on the energy of the control signal and the smoothness of the output response. Comparing the energy signals for $\phi$ and $\theta$ using both controllers, the NLPID controller presented a hug reduction as compared to the linear one. The time-domain specifications of both controllers are presented numerically in \autoref{table6} (rising $t_r$, settling $t_s$ and peak overshot $M_p$) for the position ($x$, $y$, and $z$) of the quadrotor system and yaw($\psi$) using both controllers.\autoref{table7} shows the minimum and maximum peaks for $\phi$ and $\theta$ using both controllers.
 \begin{center}
\captionof{table}{ Position and Yaw responses} \label{table6} 
\scalebox{0.9}{
  \begin{tabular}{ c  c | c | c | c | c | c }
     \cline{2-7}
      &\multicolumn{3}{ c| }{LPID}&\multicolumn{3}{ c }{NLPID}\\ \cline{2-7}
      &$t_r(s)$&$t_s(s)$&$M_p\%$&$t_r(s)$&$t_s(s)$&$M_p\%$\\ \hline
      \multicolumn{1}{ c| }{$x$}&$6.546$&$13.679$&$7.471$&$1.152$&$4.657$&$0.505$\\\hline
	  \multicolumn{1}{ c| }{$y$}&$1.934$&$9.748$&$19.375$&$1.572$&$3.023$&$0.195$\\\hline
	  \multicolumn{1}{ c| }{$z$}&$0.194$&$0.314$&$1.531$&$0.677$&$1.283$&$0.505$\\\hline
	  \multicolumn{1}{ c| }{$\psi$}&$0.681$&$5.660$&$19.014$&$0.252$&$3.840$&$8.152$\\
   \hline
  \end{tabular}
  }
\end{center}
\begin{center}
\captionof{table}{ Roll and Pitch responses} \label{table7} 

  \begin{tabular}{ c  P{1.5cm} | P{1cm} | P{1.5cm} | P{1cm}  }
     \cline{2-5}
      &\multicolumn{2}{ c| }{LPID}&\multicolumn{2}{ c }{NLPID}\\ \cline{2-5}
      &Min peak&	Max peak	&Min peak	&Max peak\\ \hline
      \multicolumn{1}{ c| }{$\theta$}&$-8.292\times10^{-2}$&$3.91\times10^{-1}$&$	-3.951\times10^{-2}$&$7.402\times10^{-2}$\\ \hline
	  \multicolumn{1}{ c| }{$\psi$}&$-9.195\times10^{-1}$&$4.385\times10^{-1}$&$	-3.775\times10^{-1}$&$1.065\times10^{-1}$\\
   \hline
  \end{tabular}
\end{center}

\subsection{Trajectory tracking}\label{Case studies}
\quad In the following, the output responses of three case studies with trajectories generated by MATLAB/Simulink to test the overall 6-DOF UAV system using linear and nonlinear PID controllers will be presented and discussed. The cases studied have been chosen to reflect the difficulties that the quadcopter control system might face in achieving the required tracking. For all of the case studies, the initial values are $x=0.1,y=0.1,z=0.1$ and the rest of the states are zero.
\paragraph{Case Study (1)}\label{Case1}
The first test case study was the circular path. The States with the reference trajectories are presented in \autoref{table8}.
\begin{center}
\captionof{table}{Case 1 input signals} \label{table8} 
  \begin{tabular}{ c | c | c }
    \hline 
   State&	Reference Trajectory&	Time/sec\\ \hline
	$x$&$	cos⁡(0.1\pi t)$&$5-t_f$\\ \hline
	$y$&$	sin⁡(0.1\pi t)$&$5-t_f$\\ \hline
	$z$&$	u(t)$      &$0-t_f$\\ \hline
	$\psi$&$	u(t)$      &$0-t_f$\\
   \hline
  \end{tabular}\par
  *$t_f=50sec$
\end{center}
\par \autoref{fig-c1} shows the tracking of the 6-DOF UAV system for the circular trajectory. The proposed NLPID controller followed the trajectory with less time and smaller error than in the LPID one. The steady state error in the LPID controller was $e_x=27\%,e_y=18\%,$ and $e_z=0\%$ , while in the proposed NLPID controller, the steady state error was $e_x=0.56\%,e_y=4.12\%,$ and $e_z=0\%$.
\begin{figure}[H]
\centerline{\includegraphics[scale=0.5]{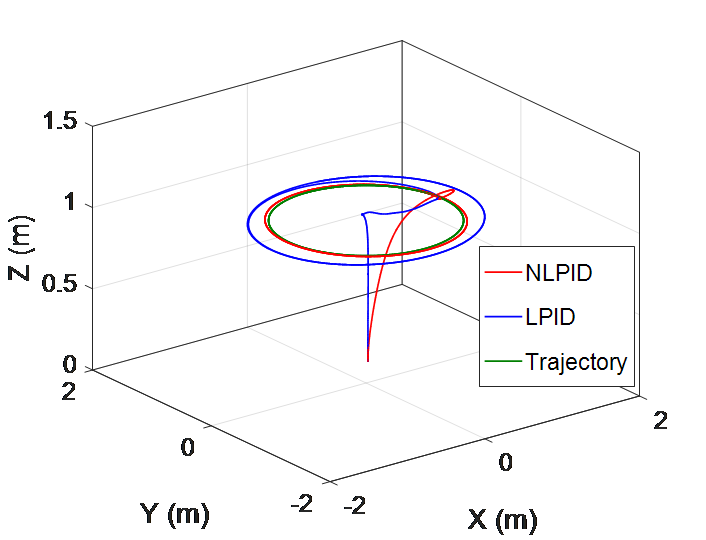}}
\caption{Case 1 - Circular Trajectory}
\label{fig-c1}
\end{figure}
\paragraph{Case Study (2)}\label{Case2}
The Second test case was the helical path, in this case study the altitude $z$ of the 6-DOF UAV is varying with time, in contrast to the first case study. The states with reference trajectories are presented in \autoref{table9}.
\begin{center}
\captionof{table}{Case 2 input signals} \label{table9} 
  \begin{tabular}{ c | c | c }
    \hline 
   State&	Reference Trajectory&	Time/sec\\ \hline
	$x$&$	cos⁡(0.1\pi t)$&$5-t_f$\\ \hline
	$y$&$	sin⁡(0.1\pi t)$&$5-t_f$\\ \hline
	$z$&$	0.2\,t$      &$0-t_f$\\ \hline
	$\psi$&$	u(t)$      &$0-t_f$\\
   \hline
  \end{tabular}\par
  *$t_f=100sec$
\end{center}
\par
The LPID controller follows the desired trajectory with constant offset for the entire time of the simulations. While the proposed NLPID controller showed an improved performance over the LPID one as shown in \autoref{fig-c2}.
\begin{figure}[h]
\centerline{\includegraphics[scale=0.5]{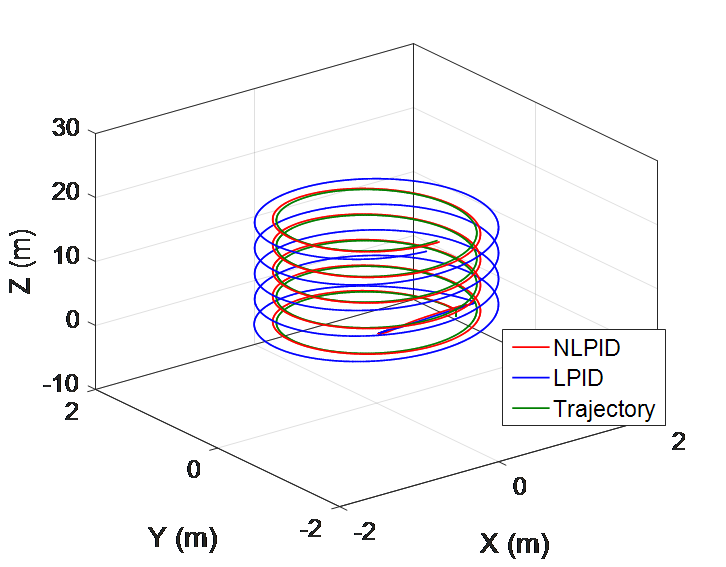}}
\caption{Case 2 - Helical Trajectory}
\label{fig-c2}
\end{figure}
\paragraph{Case Study (3)}\label{Case3}
The final case study is the square trajectory. This trajectory is a serious test for the controllers designed for the UAV systems to accomplish the required track, since, the trajectory changes its direction suddenly at certain times (i.e., at the vertices of the  square). The states with reference trajectories are presented in \autoref{table10}.
\begin{center}
\captionof{table}{Case 3 input signals} \label{table10} 
  \begin{tabular}{ c | c | c }
    \hline 
   State&	Reference Trajectory&	Time/sec\\ \hline
	$x$&$	u(t-10)-u(t-50)$&$10-50$\\ \hline
	$y$&$	(t-30)-u(t-70)$&$30-70$\\ \hline
	$z$&$	u(t)$      &$0-t_f$\\ \hline
	$\psi$&$	u(t)$      &$0-t_f$\\
   \hline
  \end{tabular}\par
  *$t_f=100sec$
\end{center}
We can see the difference between the responses of both controllers in \autoref{fig-c3}. The significance of the NLPID controller is very obvious, it tracked faster than the LPID one with very small overshot as compared to the LPID controller. The overshot in the LPID reached approximately up to $200\%$ of the desired trajectory.
\begin{figure}[h]
\centerline{\includegraphics[scale=0.46]{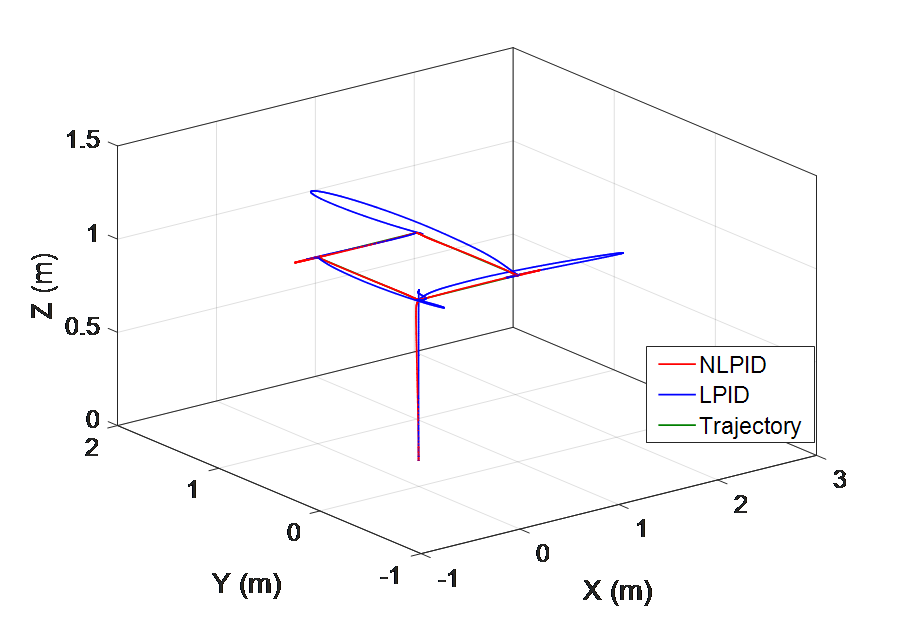}}
\caption{Case 3 - Square Trajectory}
\label{fig-c3}
\end{figure}
\section{Conclusion and Future Work}\label{Conclusion and Future Work}
\quad In this work, the exact nonlinear model of the 6-DOF UAV was adopted with the aim of designing a nonlinear controller for the stabilization and reference tracking for this complex and highly nonlinear system.  A new NLPID controller for stabilizing and controlling a 6-DOF UAV system is proposed in this paper. The stability analysis for the closed-loop systems of both the position and orientation of the 6-DOF UAV system based on Hurwitz stability criterion was analyzed and proved that the proposed controller stabilized the system and accomplished the required tracking. From the simulation tests, it can be concluded that the proposed NLPID controller is better than the linear one in terms of speed,  control energy, and steady state error. For future work, we suggest taking wind disturbances into account and rejecting them by designing a disturbance observer using active disturbance rejection control paradigm for this purpose.
\section*{Acknowledgment}
\quad A special thanks go to the Ph.D. student Wameedh Riyadh Abdul-Adheem for the deep discussions on the topics of this paper.\\
\bibliography{library}

\end{document}